%% file: main.tex
\documentclass[10pt,letterpaper]{article}
\input{preamble}
\title{\papertitle}
\input{authors}
\date{}
\begin{document}
\hypersetup{pageanchor=false,pdftitle={Report the tolerance, not the count: exact phylogenetic distance deconvolution with incomplete references},pdfauthor={Haotian Tang}}
\begin{titlepage}
\maketitle
\thispagestyle{empty}
\begin{abstract}
\normalsize
\input{abstract}
\end{abstract}
\end{titlepage}
\hypersetup{pageanchor=true}
\setcounter{page}{1}
\input{sections/introduction}

\input{sections/model_results}
\input{sections/exact_algorithm}

\input{sections/partial_observations}

\input{sections/complexity_boundary}

\input{sections/experiments}
\input{sections/discussion}
\clearpage
\bibliographystyle{plain}
\bibliography{references}
\end{document}

%% file: preamble.tex
\usepackage[letterpaper,margin=1in]{geometry}
\usepackage{amsmath,amssymb,amsthm,booktabs,array,graphicx}
\usepackage[T1]{fontenc}
\usepackage{lmodern,microtype}
\usepackage[colorlinks=true,linkcolor=blue,citecolor=blue,urlcolor=blue]{hyperref}
\newtheorem{theorem}{Theorem}[section]
\newtheorem{lemma}[theorem]{Lemma}

\newtheorem{corollary}[theorem]{Corollary}
\theoremstyle{definition}

\theoremstyle{remark}

\newcommand{\dist}{\operatorname{dist}}

\newcommand{\LP}{\mathsf L}
\newcommand{\one}{\mathbf 1}

\newcommand{\bits}{\mathcal B}
\graphicspath{{figures/}}
\newcommand{\papertitle}{Report the tolerance, not the count:\\exact phylogenetic distance deconvolution with incomplete references}

%% file: authors.tex
\author{Haotian Tang\\[0.3em]
\normalsize Mohamed bin Zayed University of Artificial Intelligence\\
\normalsize\href{mailto:haotian.tang@mbzuai.ac.ae}{\texttt{haotian.tang@mbzuai.ac.ae}}}

%% file: abstract.tex
Phylogenetic distance deconvolution represents mixture-to-reference distances by weighted sources
placed on a reference tree. We study the minimum number of strictly interior sources that explains a
distance profile, and what incomplete references cost.

On a positive-rational reduced tree we give an exact algorithm with $2^{O(K)}\operatorname{poly}(\mathcal B)$
bit complexity for a free positive mean pendant and any nonempty set of observed leaves; the
single-exponential dependence on $K$ is tight under ETH, already with every leaf observed. With arbitrary
missing references and a prescribed pendant the problem is W[1]-hard in the source budget $K$, even on
positive binary trees supplied with a strict explanation using at most $K+1$ sources.

We then use the algorithm as an instrument. Removing references can only lower the minimum source count,
never raise it, at every error tolerance. Which references are missing then matters far more than how
many. Under nearest-to-source deletion a median of two
references can be erased before the generating support stops being the unique smallest explanation, and
that budget stays nearly flat as the panel grows (Spearman $0.10$ over 20 to 361 references, against
$0.82$ for random loss).
Discarding 7,931 of 10,575 references from a real 16S panel at random costs 1.8\% of the tolerance at which
one source suffices. Discarding comparably many as whole clades costs 90\%.

That tolerance needs neither ground truth nor an error model, which makes the model measurable on real data.
The generating configuration's worst-leaf residual exceeds the one-source optimum in all 296 real
ANI cases, and in 248 after refitting the common pendant. The phenomenon
is not unobservable in practice. A point estimate of the count alone does not quantify it. Methods for this
problem should report the tolerance at which an explanation of a given size exists.

%% file: sections/introduction.tex
\section{Introduction}
A sequence mixture contains contributions from several organisms, potentially
at evolutionary distances from the available references. Phylogenetic distance
deconvolution (PDD) models a distance from the whole mixture to each reference
as a weighted average of source-to-reference tree distances. Its sources may
lie inside reference-tree edges, allowing representation of constituents that
do not coincide with a reference genome. Arasti et al. introduced this
formulation, fixed-support fitting, exhaustive support search, and the practical
DecoDiPhy heuristic~\cite{pdd2026}.

We ask an exact computational question: does a given distance profile admit
an explanation with at most $K$ sources, and can a smallest such explanation
be constructed? The source edges, weights, and interior positions are unknown.
Enumerating edge subsets incurs a dependence of the form $N^{O(K)}$ on tree
size $N$. The question is whether the tree structure permits an algorithm
whose exponential dependence is confined to $K$. Missing reference distances
introduce an additional issue: projection onto the observed subtree changes
the common additive distance, or \emph{mean pendant}. Treating this quantity
as free or prescribed leads to different problems.

Our results give a complexity classification of these exact models. A
single-exponential algorithm applies to arbitrary missing leaves with a free
positive pendant, and to complete or one-missing-leaf observations with a
prescribed pendant. Complete-observation instances already admit restricted
NP/ETH lower bounds. With arbitrary missing leaves and a prescribed pendant,
the problem is W[1]-hard in the source budget even on strictly binary trees.
The latter reduction supplies an explicit legal explanation using at most one
source more than the requested budget: the difficulty persists when the data
are known to be exactly explainable.

The specialized ingredients are a bound on the branching excess of the
unknown distance potential above its leaf envelope, a strict source-count
preserving lift through omitted subtrees, and continuous-tree reductions
that rule out arbitrary competing placements. Established enumeration,
matching, and rational linear programming complete these arguments.

Exact parsimony is different from identifying the organisms that generated a
sample. We make no statistical consistency claim for an estimated distance
vector. Evidence below comes from exact profiles, where a missing entry is a removed
coordinate, and from published estimated distances, where we report the
tolerance admitting an explanation of a given size. Neither establishes
biological composition recovery.

\input{sections/related_work}

%% file: sections/related_work.tex
\section{Related work}\label{sec:related-work}
Phylogenetic placement puts a single query on a fixed reference tree, from an alignment and a likelihood model~\cite{pplacer2010,epang2019} or from estimated distances alone~\cite{apples2020}, with extensions to fragmentary markers and very large trees~\cite{sepp2012,bscampp2025}; $k$-mer methods supply the read-to-reference distances such placements consume~\cite{krepp2026}.
A placement is a one-source answer: one edge, and no support to choose.

Balaban and Mirarab's MISA method jointly decomposes mixture distances and places two constituents on a reference phylogeny, using a model derived from $k$-mer Jaccard indices~\cite{misa2020}.
Arasti et al. formulate PDD for arbitrary source counts and weighted additive mixture distances~\cite{pdd2026}.
Their work supplies same-edge aggregation, the reduction of pendant lengths to their weighted mean, fixed-support linearization and convex fitting, exhaustive support enumeration, and the DecoDiPhy heuristic.
It also exhibits nonidentifiability and studies restricted identifiable configurations, reporting that its failures need exactly equal branch lengths and particular proportions, and that identifiability is not a practical obstacle to its search.
We ask a different question of the same model: whether the minimum-count explanation is unique, and whether the parameters are pinned once a support is fixed. Neither holds as often as one would want, and the failures we measure sit on published reference phylogenies, not on symmetric constructions.
Arasti et al. name identifiability and robustness to distance error among the questions their framework leaves open; this paper takes up both, for the exact minimum source count with strict edge-interior placements and positive weights. One consequence of our deficit bound settles a case they leave open: on a reduced tree with all leaves observed, the occupied edges of an at-most-two-source explanation are determined even when the weights are unequal and unknown (Supplementary Appendix~E). The contribution is a source-budget complexity classification and what it measures, not the PDD model or fixed-support optimization.

Deconvolution against a phylogeny has been proved hard before, in cancer genomics. Hujdurovi\'{c} et al. split the rows of a binary
mutation matrix into the fewest rows admitting a perfect phylogeny; they correct an earlier hardness
proof, obtain NP-hardness by a different argument, and solve a restricted class of instances in
polynomial time~\cite{hujdurovic2018}. Zaccaria et al. pose copy-number tree mixture deconvolution
for multi-sample bulk sequencing tumor data~\cite{zaccaria2017}. Qi, Pradhan, and El-Kebir prove
that counting the solutions of perfect phylogeny mixture is \#P-complete and hard to approximate
within a constant factor~\cite{qi2019}. There the tree is unknown
and the mixing happens at its leaves over discrete states; here the topology and the edge lengths
are given, the sources sit at continuously chosen interior points of original edges, and our lower bounds
must rule out arbitrary competing continuous placements.

Sparse nonnegative fitting is the broader context. Bereg and Haghpanah prove NP-hardness for minimum-cardinality convex representations over arbitrary finite point sets when dimension is part of the input~\cite{bereghaghpanah2020}; their dictionary is finite, ours continuous.
Nadisic et al. give a branch-and-bound method for exact sparse nonnegative least squares and, biobjectively, the whole error-cardinality Pareto front instead of one chosen $k$~\cite{nadisic2020,nadisic2021}; that error-cardinality viewpoint is prior, and our contribution is the exact source-budget complexity classification for strict PDD; Damaschke and Sumita et al. obtain fixed-parameter and parameterized algorithms under sparsity restrictions on the system itself~\cite{damaschke2013,sumita2015}.
Neither transfers: our localized potential equations carry signed Laplacian coefficients without complementarity, and the sparsity we exploit comes from the geometry of a small source hull, not from an assumed bound on the input matrix.

Continuous-dictionary deconvolution is the nearest framework to our fixed-support linearization.
Duval and Peyr\'{e} show that above a signal-to-noise threshold, total-variation regularization over
measures recovers the exact number of Diracs of a nondegenerate sum, with a dual certificate
governing the limit as a discretizing grid is refined~\cite{duval2015}. Our source measure has unit total
mass, so a plain total-variation penalty adds a constant and prefers neither fewer sources nor
fewer occupied edges; the sparsity minimized here counts occupied edges, and interpolation along an
edge is exact, not a grid approximation. Their certificate buys a recovered count under nondegeneracy, whereas here an
arbitrarily small change can raise the minimum count to any prescribed value (Supplementary
Appendix~D). A recent general treatment of cardinality-bounded nonnegative approximation contains our feasibility questions as instances, not the reverse~\cite{schachter2026}. Bienstock and Mu\~{n}oz approximate polynomial optimization by polynomial-size linear programs at bounded tree-width~\cite{bienstock2018}, which does not settle an exact rational feasibility question.

Several ingredients of our proofs are classical.
The weighted-tree distance inverse is due to Bapat, Kirkland, and Neumann~\cite{bapat2005}, and the lower Lipschitz envelope is a standard extension construction~\cite{mcshane1934}.
Directional mass formulas for metric-tree distance potentials also appear in the theory of Fr\'echet medians~\cite{romon2023}.
Our anchored search uses the include-or-exclude principle underlying connected-set enumeration~\cite{fomin2012}.
What is new is the geometry that bounds and locates the envelope deficits. Oettli and Prager characterized by signed inequalities the solvability of a linear system known only within bounds~\cite{oettli1964}; we state everything for exact data.
Cardinality-constrained sum encodings and parameterized sum problems have precedents~\cite{defarias2003,abboud2014}; the reductions here must additionally enforce normalized positive mixtures, continuous placements, and the stated observation and pendant constraints.

%% file: sections/model_results.tex
\section{Model and main results}\label{sec:model}
Let $T=(V,E)$ be a finite \emph{reduced} tree: every nonleaf vertex has degree
at least three. Its edge lengths are positive rationals, $N=|V|\ge2$, and
$\varnothing\ne R$ is a subset of the original leaves. The supplied distances
are $d_r\in\mathbb Q$ for $r\in R$. A \emph{strict explanation} has $k\ge1$
points $q_i$ in the interiors of distinct original edges, weights $p_i>0$
with $\sum_i p_i=1$, and a positive mean pendant $y$, satisfying
\begin{equation}\label{eq:model}
 d_r=y+\sum_{i=1}^k p_i\dist_T(r,q_i),\qquad r\in R.
\end{equation}
If the individual pendant lengths are $y_i$, only their weighted mean
$y=\sum_i p_i y_i$ enters these observations. The model and this ambiguity
are inherited from PDD~\cite{pdd2026}. Sources on one edge can be merged at
their weighted position without altering original-vertex reports or losing
strictness. We therefore count distinct source edges (Figure~\ref{fig:model}, left).

The \emph{source-budget problem} decides whether $k\le K$, and returns the
minimum count within the budget and a strict witness when one exists. We
distinguish a free positive $y$ from a prescribed rational $y_0>0$.
The latter is an additional information assumption, not a quantity presumed
known in every biological application. Write $\bits$ for the total binary
encoding length of the rational input, including any supplied generator.
We may clamp $K$ to $|E|$; budget zero is infeasible.

\begin{theorem}[Exact upper bounds]\label{thm:upper}
On a reduced positive-rational tree, with any nonempty observed-leaf subset,
the free-positive-pendant problem has a deterministic algorithm using
$O(N)+2^{O(K)}$ rational arithmetic operations and
$2^{O(K)}\operatorname{poly}(\bits)$ bit time. It returns the minimum count
within $K$ and a polynomial-bit rational strict witness, or reports
infeasibility within $K$. The same bounds hold with prescribed positive
pendant when at most one original leaf is unobserved.
\end{theorem}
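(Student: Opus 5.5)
The plan is to reduce the search over source edges to a search over a small connected region of $T$ whose size depends only on $K$, and then to settle each candidate region by exact rational linear programming. Let $f(x)=\sum_i p_i\dist_T(x,q_i)$ denote the distance potential of a candidate explanation. It is convex along every geodesic and piecewise linear on the subdivided tree. Off the convex hull of the sources it is affine along each path leading away from the hull, with slope $+1$. So on every subtree hanging off the hull, $f$ equals the distance to the hull's attachment point plus a constant. Its values on observed leaves therefore determine it on the hanging subtree, up to the pendant shift.

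The first step is to form the lower Lipschitz envelope of the data $(d_r-y)_{r\in R}$ with the McShane construction, and to bound the branching excess of $f$ above this envelope. At each hull vertex the excess is controlled by the directional mass formula: along direction $e$ the slope equals (mass not beyond $e$) minus (mass beyond $e$). This should show that only $O(K)$ original vertices can carry positive deficit. It should also show that the hull spans $O(K)$ edges once degree-two hull vertices are suppressed; branch points of the hull number at most $k-1$, and leaves of the hull are sources.

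The second step uses these bounds for an anchored search. The envelope computed in $O(N)$ operations exposes a vertex where the deficit must be positive, or a leaf path that any hull must meet. From that anchor I enumerate connected edge sets of size $O(K)$ by the include-or-exclude branching of connected-set enumeration. Because every reduced-tree vertex has degree at least three, hanging subtrees collapse to their observed leaves. Each branch can then be charged to the budget, which gives $2^{O(K)}$ candidate hulls and avoids $N^{O(K)}$.

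The third step fixes a candidate hull $H$ together with a choice of at most $K$ occupied edges in it, again $2^{O(K)}$ choices. I then linearize as in fixed-support PDD. The variables are the masses $p_i$, the products $p_i t_i$ (where $t_i$ is the offset of source $i$ along its edge), and $y$. Each leaf equation is linear in these variables. I impose $p_i>0$, $0<p_it_i<p_i\ell_i$, $\sum p_i=1$ and $y>0$. Strictness is handled by maximizing a common slack variable in one rational LP and checking that the optimum is positive. Polynomial-bit vertex solutions then give the $2^{O(K)}\operatorname{poly}(\bits)$ bound and a rational witness. Minimality follows by scanning $k=1,\dots,K$.

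For a prescribed $y_0$ with at most one unobserved leaf, the same argument goes through. Restricting to the observed subtree changes the pendant only through the single missing leaf, and that effect can be absorbed by guessing whether the omitted leaf's edge is on the hull, a constant factor. The lift back to $T$ must keep the source count and strictness. I would add a short check that sources on a suppressed degree-two path can be placed on a genuine original edge interior.

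I expect the main obstacle to be the deficit bound itself. It must hold with missing leaves and free $y$, and in that setting the envelope is defined only up to shifting $y$. I would first try to normalize $y$ by its maximal feasible value, which pins the envelope to touch $f$ at some observed leaf. I would then argue that each positive-deficit vertex forces a distinct hull branch point or source, using the mass formula on each side of that vertex.
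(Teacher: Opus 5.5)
\textbf{Gap 1: the supports are not enumerated in $2^{O(K)}$ time.} Your second and third steps claim $2^{O(K)}$ candidate hulls with $2^{O(K)}$ choices of occupied edges in each, and this fails. In a reduced tree the minimal subtree spanning $k$ sources can contain $\Theta(N)$ original edges, because a vertex of hull-degree two still has $T$-degree at least three. Suppression leaves $O(K)$ hull edges, but each stands for an arbitrarily long path, and a source may occupy any of its original edges. So connected-set enumeration of $O(K)$-edge sets never produces these hulls. Their degree-two vertices cannot be charged to the budget, and choosing the occupied original edges costs $N^{O(K)}$, exactly the dependence the theorem must avoid.

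You note that a hanging subtree is determined by its observed leaves, but you use this only for the deficit bound. The missing idea is to push it through the inverse formula. Since $\delta=f-h$ vanishes off $S$, the identity $m=b-\LP\delta/2$ gives $m_v=b_v$ at every $v$ outside $A=S\cup N(S)$. The baseline $b$ is computed once from the envelope in $O(N)$ operations. On a feasible instance only $O(K)$ of its entries are positive, and outside $A$ the positive entries are exactly the source-edge endpoints.

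The paper therefore never guesses a source edge or a hull. It enumerates $S$ from positive-baseline anchors under $\sum_{v\in S}(\deg(v)-2)\le K-2$, together with the exact zero face $Z\subseteq A$. For each pair it solves a common-slack system in the at most $K-2$ deficits $\delta_v$ by active-basis enumeration. It then reads off the sources as a minimum edge cover, of size $|P|-\nu(T[P])$, of the forest induced on the positive-mass vertices. Your fixed-support linearization is sound once a support is given, but nothing in your plan supplies the supports in $2^{O(K)}$ time.

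\textbf{Gap 2: missing leaves.} Here the obstacle you name is not the real one. The pendant shift is harmless. Compare $h(v)=\max_{r\in R}\{d_r-\dist(v,r)\}$ against the potential that includes $y$: the deficit does not depend on $y$, and $h$ agrees with $f$ at every observed leaf without any normalization.

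What fails is tightness. An unobserved leaf $z$ with neighbor $c$ has $h(z)=h(c)-\dist(c,z)$, while strict sources give $f(z)>f(c)-\dist(c,z)$, so $\delta_z>0$. Likewise $\delta_v>0$ at any vertex where every direction containing an observed leaf carries mass, for instance a hull-path vertex whose hanging subtrees are unobserved. Your $O(K)$ deficit bound is therefore false on $T$.

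The paper handles this in three moves:
\begin{itemize}
\item It projects onto the reduced backbone $B$ spanned by $R$, all of whose leaves are observed.
\item It permits vertex atoms at the marked gates $M$. Without them the minimum can rise from one to three on a four-leaf star.
\item It lifts those atoms into deleted incident edges, and the free pendant absorbs the extra distance.
\end{itemize}
With a prescribed pendant this lift is not free, and guessing whether the omitted edge is occupied does not repair the bound. For one missing leaf the paper keeps $c$, possibly as a zero-branching-weight coordinate, and enforces $0<Y_p-y_0<\ell m_c$ for a hidden source.
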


\begin{theorem}[Restricted lower bounds]\label{thm:hardness}
The source-budget problem is NP-complete on positive-rational binary trees
with mean pendant prescribed to be one, even with all leaves observed and an
explicit larger legal strict generating mixture supplied as input. The same
holds with exactly one missing leaf. Under deterministic ETH, neither
restriction admits a deterministic
$2^{o(K)}\operatorname{poly}(\bits)$ algorithm.
\end{theorem}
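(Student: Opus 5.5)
The plan is to prove membership in NP, reduce from a covering problem of linear size, and get the ETH bound from the fact that the reduction keeps $K$ linear in the formula.

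\textbf{Membership in NP.} The certificate is a support of at most $K$ original edges. Fix such a support and parametrize the source on edge $e_i=(u_i,v_i)$ by its offset $t_i\in(0,\ell(e_i))$ from $u_i$. Then
\[
\dist_T(r,q_i)=\dist_T(r,u_i)\pm t_i,
\]
with the sign fixed by which side of $e_i$ the leaf $r$ lies on. Substituting $s_i=p_it_i$ makes \eqref{eq:model} linear in $(p,s)$ when $y=y_0=1$ is prescribed. Strictness becomes the strict inequalities
\[
0<s_i<p_i\,\ell(e_i),\qquad p_i>0,\qquad \sum_i p_i=1 .
\]
A strict rational LP that is feasible has a polynomial-bit rational solution, found for instance at the barycentre of vertices of the closure, perturbed inward. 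So the support together with this LP solution is a polynomial certificate. The supplied larger generator changes nothing here.

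\textbf{Hardness.} I would reduce from Exact Cover by 3-Sets (X3C), restricted to instances in which every element lies in $O(1)$ sets. Such instances come from 3-SAT with $n$ variables and $O(n)$ clauses, via sparsification, with $|U|=3q=O(n)$. The target budget is $K=q$.

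Build a positive-rational binary tree as follows. Each set $S_j$ gets a gadget edge $e_j$ hanging from a spine. Each element $x$ gets a leaf, or a small cluster of leaves, whose distance to the gadget of $S_j$ is shortened by a fixed amount exactly when $x\in S_j$. Then choose target distances $d_r=1+f(r)$, where
\[
f=\frac1q\sum_{j\in C}\dist_T(\cdot,q_j)
\]
for a hypothetical exact cover $C$ with sources at prescribed midpoints. Choose the edge lengths so that this value of $f$ at each leaf depends only on the element count at that leaf, and not on which cover $C$ is used. The larger legal generator supplied with the input places one source per element, or per set, with explicit positive weights realizing the same leaf values.

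Completeness is then immediate: an exact cover gives an explanation with $q$ sources.

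\textbf{Soundness, the main obstacle.} Soundness must exclude arbitrary continuous placements of at most $q$ sources. My first attempt would use the directional mass formula for the distance potential. Because $y$ is prescribed and every leaf is observed, $f$ is known at all leaves. For each edge $(u,v)$, the slope of $f$ along the edge equals the mass on the $u$-side minus the mass on the $v$-side. Combined with the Bapat–Kirkland–Neumann inverse, this determines the mass in each pendant subtree and the potential at internal vertices up to the source locations.

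From these subtree masses I would derive three facts. Every element cluster must receive total mass exactly $3/q$ from adjacent gadgets. Each gadget can carry mass at most $1/q$ without violating a leaf equation. Hence $q$ sources can only be $q$ gadget sources of weight exactly $1/q$ covering each element once, which is an exact cover.

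The delicate part is rigidity: excluding sources on spine edges that split mass across several gadgets. I would handle this by choosing spine lengths that are generic rationals, for example incommensurable scales $2^{-j}$ times a large factor. Any spine source then forces a leaf-value discrepancy that no budget-$q$ configuration can repair. If the bookkeeping fails, a fallback is to enlarge each element gadget into a cherry, so that element masses are read off locally.

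\textbf{One missing leaf.} Attach one extra leaf via a new binary vertex on a spine edge, with a long pendant. Omit it from $R$, and check that the same mass accounting still holds from the observed leaves.

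\textbf{ETH bound.} Since $K=q=O(n)$ and $\bits=\operatorname{poly}(n)$, a deterministic $2^{o(K)}\operatorname{poly}(\bits)$ algorithm would decide sparse 3-SAT in $2^{o(n)}$ time, contradicting ETH.
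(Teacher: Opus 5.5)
Your NP-membership argument is sound. It is cleanest as a common-slack LP on the fixed support, whose positive optimum at a rational vertex is a polynomial-bit certificate. The ETH transfer would also follow from any polynomial reduction with $K$ linear in a sparsified formula.

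\textbf{The gap is the hardness gadget.} Distances in a tree cannot carry the set--element incidence your reduction needs. Suppose, as you require, that $\dist_T(x,q_j)=A_j+B_x-c\,[x\in S_j]$ for element leaves $x$ and gadget points $q_j$, with $c>0$ and arbitrary offsets $A_j,B_x$. Consider a quartet $\{q_j,q_k,u,w\}$ and its two mixed pairings, $\{q_j,u\},\{q_k,w\}$ and $\{q_j,w\},\{q_k,u\}$. Their sums differ by $2c$ when $u\in S_j\setminus S_k$ and $w\in S_k\setminus S_j$. They differ by $c$ when $u\in S_j\cap S_k$ and $w$ lies in exactly one of the two sets. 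In both cases the four-point condition forces $\dist(q_j,q_k)+\dist(u,w)$ to equal the larger mixed sum.

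Now take a triangle $x\in S_1\cap S_3$, $y\in S_1\cap S_2$, $z\in S_2\cap S_3$ with $x\notin S_2$, $y\notin S_3$, $z\notin S_1$. General X3C instances, bounded occurrence included, contain this pattern. The nine resulting equations make the three offset-normalized set--set distances equal and then force $c=0$. So a tree realizes only restricted, tree-like incidence, and you give no reason X3C stays hard there. Consequently the cover-independent target $f$, and the larger generator built from it, do not exist for the instances you need.

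\textbf{The soundness sketch would not close even with a valid gadget.} The slope identity gives subtree masses only from potential differences. The internal potential is $h+\delta$ with $\delta$ unknown on up to $K-2$ branch vertices of the source hull (Lemma~\ref{lem:hull}), so the leaf data do not fix those masses. ``Generic'' spine lengths also exclude nothing: a competitor's weights and positions vary continuously, and exact feasibility needs a quantitative gap valid for every closed placement.

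\textbf{The paper avoids incidence altogether.} It reduces cardinality-constrained \textsc{Subset Sum} and represents each integer by a binary cherry, so numbers enter through lengths and distances. An inactive item costs at least one source edge, an active one costs two, and a calibration edge is mandatory. The budget then fixes the number of active items. Normalization and nonnegativity of the pendant give two resource inequalities that force the exact sum. Explicit core lengths keep this integer gap against arbitrary continuous competitors. The larger generator is answer-independent and shares the rescaled pendant, and ETH comes from a carry-free encoding of sparse 3-SAT with $K$ linear.

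Your one-missing-leaf step also needs a real argument. A source on the hidden edge acts as an atom at its gate plus a pendant shift (Section~\ref{sec:partial}). That is exactly the extra freedom that can lower the minimum count, so the NO direction must be reproved with it.
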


The supplied-generator version is a decision language: invalid generators
are rejected by exact polynomial-time checking. Every instance produced by
our reductions has a valid generator on both YES and NO inputs.
The lower bound matches the linear order of the exponent in $K$, not an
optimal exponential base. It is not strong NP-hardness.

\paragraph{Lower-bound mechanism.}
Theorem~\ref{thm:hardness} is proved in Supplementary Appendix A by reducing
cardinality-constrained \textsc{Subset Sum}. Each integer is represented by a
binary cherry: an inactive item costs at least one source edge, while an
active item costs two. A calibration edge is mandatory. Normalization and
nonnegative pendant length impose two resource inequalities; together with
the source budget, they force an exact integer sum. Explicit positive core
lengths preserve this gap for arbitrary continuous competing placements.
Strict YES witnesses and a larger answer-independent generator share one
fixed pendant, which is rescaled to one. The NO argument there excludes every
nonnegative pendant, so both lower bounds apply to the free-mean row too. The ETH transfer uses a carry-free
encoding of sparse 3-SAT with source budget linear in the formula size.

\begin{table}[!htbp]
\centering
\caption{Exact source-budget complexity. All sources lie strictly inside
distinct original edges. The upper bounds hold on arbitrary reduced trees;
the lower bounds already hold on positive binary trees.}
\label{tab:classification}
\begin{tabular}{@{}p{0.27\linewidth}p{0.20\linewidth}p{0.45\linewidth}@{}}
\toprule
Observed leaves & Mean pendant & Result \\
\midrule
Any nonempty subset & Free $y>0$ & $2^{O(K)}\operatorname{poly}(\bits)$ time; NP-complete and tight under ETH already with every leaf observed \\
All, or all but one & Prescribed $y_0>0$ & The same bounds \\
Arbitrary subset & Prescribed $y_0=1$ & W[1]-hard in $K$, even with a supplied strict $(K+1)$-source explanation \\
\bottomrule
\end{tabular}
\end{table}

The third row is proved in Theorem~\ref{thm:partial-hardness}. Its contrast
with the first two rows is a statement about different information regimes,
not a contradiction between the upper and lower bounds.

\begin{figure}[t]
\centering
\includegraphics[width=\linewidth]{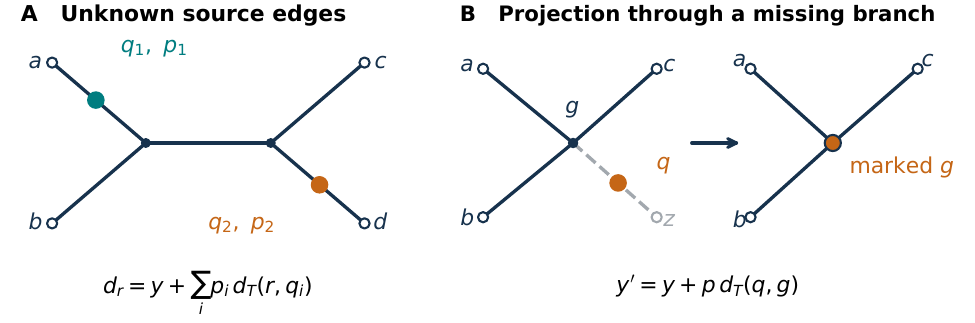}
\caption{The exact model and the role of omitted references. Left: two
weighted sources on unknown original edges generate mixture-to-leaf distances.
Right: a source on an omitted branch projects to its attachment gate. Its
weighted distance to the observed subtree is absorbed into the mean pendant.
A retained branch gate must permit a vertex atom; deleting that permission
can change the minimum source count. The drawings are schematic, not to scale.}
\label{fig:model}
\end{figure}

%% file: sections/exact_algorithm.tex
\section{Tree geometry and localization}\label{sec:geometry}
For now let $R$ contain every leaf. Let $D$ be the full vertex distance matrix, $\LP$ the Laplacian with conductance $1/\ell_e$ on edge $e$, $\tau_v=2-\deg(v)$ and $W=\sum_e\ell_e$. The weighted-tree inverse identity gives \cite{bapat2005}
\begin{equation}\label{eq:inverse-identities}
 D\tau=W\one,\qquad D\LP=\one\tau^\top-2I.
\end{equation}
An edge source at fraction $x$ from endpoint $u$ to $v$ gives endpoint masses $p(1-x)$ and $px$. Thus its vertex potential is $f=Dm+y\one$ for a nonnegative mass vector of sum one. Conversely, for any vertex potential $f$, set
\begin{equation}\label{eq:inverse}
 m=(\tau-\LP f)/2,\qquad y=(\tau^\top f-W)/2.
\end{equation}
Then $\one^\top m=1$ since $\one^\top\tau=2$ and $\one^\top\LP=0$, and \eqref{eq:inverse-identities} gives $Dm+y\one=f$. Nonnegative masses still need a legal strict source realization; they are not automatically valid strict edge sources.

Define the lower Lipschitz envelope and baseline
\begin{equation}\label{eq:envelope}
 h(v)=\max_{r\in R}\{d_r-\dist(v,r)\},\quad
 b=(\tau-\LP h)/2,\quad Y=(\tau^\top h-W)/2.
\end{equation}
The envelope is classical \cite{mcshane1934}. Because every feasible $f$ is 1-Lipschitz, $f\ge h$. At an observed leaf feasibility implies $h(r)=d_r$; otherwise reject. Put $\delta=f-h$ and $S=\{v:\delta_v>0\}$. Then $\delta$ is zero at all leaves.

\begin{lemma}[Weighted deficit bound]\label{lem:hull}
If at most $K$ sources realize $f$, then $S=\varnothing$ for $K=1$, and
also for $K=2$ when every nonleaf vertex has degree at least three. For $K\ge2$,
\begin{equation}\label{eq:weighted-budget}
 \sum_{v\in S}(\deg(v)-2)\le K-2.
\end{equation}
The assertions still hold when sources may sit at internal vertices, each
such atom counted in $K$.
\end{lemma}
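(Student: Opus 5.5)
The plan is to characterize $S$ geometrically, as a subset of the branch points of the convex hull of the sources, and then to count with the handshake identity for trees. Fix a realization with $k\le K$ sources, so that $f(v)=y+\sum_i p_i\dist_T(v,q_i)$ with each $q_i$ either interior to an edge or, in the extended form of the statement, an internal vertex. For a vertex $v$ and one of its $\deg(v)$ branches (the components of $T$ minus $v$, together with the connecting edge), call the branch \emph{empty} if it contains no $q_i$. Sources sitting at $v$ itself lie in no branch.

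\textbf{Step 1 (key claim).} If some branch at $v$ is empty, then $\delta_v=0$. Choose any leaf $r$ in that branch; it exists because every branch of a finite tree contains a leaf, and it is observed because $R$ is all leaves in this section. For each source $q_i$, the path from $r$ to $q_i$ passes through $v$, since $q_i$ lies outside the branch or at $v$. Hence $\dist(r,q_i)=\dist(r,v)+\dist(v,q_i)$. Averaging with the weights $p_i$ gives $d_r=f(r)=f(v)+\dist(v,r)$. Therefore $h(v)\ge d_r-\dist(v,r)=f(v)$. Combined with $f\ge h$, this gives $\delta_v=0$. So every $v\in S$ has all $\deg(v)$ branches nonempty.

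\textbf{Step 2 (hull counting).} Let $H$ be the minimal subtree of $T$, subdivided at the source points, that spans $q_1,\dots,q_k$. Its leaves are source points, so $H$ has at most $k$ leaves. Take $v\in S$. Every branch at $v$ contains a source, so $v$ lies on paths between sources, $v\in H$, and $\deg_H(v)=\deg_T(v)$. In any finite tree with at least one edge, $\sum_{w}(\deg_H(w)-2)=-2$. Moving the leaf terms to the other side gives
\[
\sum_{w:\deg_H(w)\ge3}(\deg_H(w)-2)=\#\text{leaves}(H)-2\le k-2.
\]
Vertices of $S$ have $\deg_H(v)=\deg(v)\ge2$. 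A vertex in $S$ of degree two contributes zero, and those of degree at least three are among the terms on the left. Hence $\sum_{v\in S}(\deg(v)-2)\le k-2\le K-2$, which is \eqref{eq:weighted-budget}. If $H$ is a single point, then no vertex has two nonempty branches, and $S=\varnothing$.

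\textbf{Step 3 (small cases).} Leaves are never in $S$, since $\delta$ vanishes at leaves. A nonleaf $v\in S$ needs at least $\deg(v)\ge2$ nonempty branches and hence at least $\deg(v)$ sources. So $K=1$ forces $S=\varnothing$. When nonleaf degrees are at least three, $K=2$ also forces $S=\varnothing$. Vertex atoms never populate a branch at their own vertex, and they occupy a point of $H$ just like edge-interior sources, so the same arguments give the extension in which atoms are counted in $K$.

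I expect the main point requiring care to be Step 1, specifically the path-through-$v$ identity when a source sits exactly at $v$ or on an edge incident to $v$ but outside the chosen branch. It should be checked that the branch is defined to include its connecting edge, so that an interior source on that edge makes the branch nonempty. A lesser point is the degenerate hull in Step 2; it is handled separately there, so no handshake bound is needed when $k=1$ or when all sources coincide at one point.
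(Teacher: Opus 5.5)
Your proof is correct and follows the paper's route. The first step is the path identity: an empty direction at $v$ makes the envelope tight there, since a leaf in that direction reports $f(v)+\dist(v,r)$. The paper phrases this through initial-path overlaps and obtains an equivalence, but only the direction you prove is needed. The second step is the handshake count on the minimal subtree spanning the sources, where every $v\in S$ has $H$-degree equal to its $T$-degree; your treatment of the small cases and of vertex atoms matches the paper's as well.
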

A path identity makes $\delta_v$ positive exactly when every component of $T-v$ contains a source
away from $v$: if one direction carries no mass, an observed leaf in it reports $f$ exactly and the
envelope is already tight at $v$. Such a $v$ is a branch point of the minimal subtree $H$ spanning
the sources, with $H$-degree equal to its $T$-degree, and a tree with at most $K$ terminals
satisfies $\sum_v(\deg_H(v)-2)\le K-2$ over its branch points, which is
\eqref{eq:weighted-budget}. The unknown part of the potential is therefore not merely small, it
sits on the branching of a $K$-terminal subtree; and a boundary sum over each component of $T[S]$
leaves a strictly positive flux, giving every component a positive baseline mass, which is where
the enumeration starts. Supplementary Appendix~E proves both.

The inverse formulas give
\begin{equation}\label{eq:local-affine}
 m=b-\LP\delta/2,\qquad
 y=Y-\tfrac12\sum_{v\in S}(\deg(v)-2)\delta_v.
\end{equation}
Only $A=S\cup N_T(S)$ changes mass. For $K\ge3$, \eqref{eq:weighted-budget} implies
\begin{equation}\label{eq:local-size}
 |S|\le K-2,\quad \sum_{v\in S}\deg(v)\le3(K-2),\quad |A|\le4K-8.
\end{equation}

\begin{lemma}[Positive anchors and baseline filters]\label{lem:anchor}
Every connected component of $T[S]$ contains a vertex with $b_v>0$. For $K\ge2$, at most $3K-2$ baseline masses are positive. For $K=1$, at most two are positive. For $K\ge3$, at most $4K-8$ are negative; for $K\le2$ none are negative.
\end{lemma}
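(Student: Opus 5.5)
The plan is to work throughout with the local affine relation \eqref{eq:local-affine}, $m=b-\LP\delta/2$, where $m\ge0$ is the endpoint-mass vector of a realizing configuration with at most $K$ sources and $\delta=f-h\ge0$ is supported on $S$. Written coordinatewise it reads
\[
 b_v=m_v+\tfrac12\sum_{u\sim v}(\delta_v-\delta_u)/\ell_{uv}.
\]
Each claim then comes from reading off the sign of the Laplacian term on the three vertex classes: $S$, $N_T(S)\setminus S$, and $V\setminus A$. For $K\le2$, Lemma~\ref{lem:hull} (using that $T$ is reduced) gives $S=\varnothing$, so $b=m$. Then no baseline mass is negative. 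The positive ones lie in the support of $m$, which has at most two vertices per source: at most $2$ for $K=1$ and at most $4=3K-2$ for $K=2$. Vertex atoms, if allowed, only shrink this count.

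\textbf{Anchors.} Fix a component $C$ of $T[S]$ and sum the displayed identity over $v\in C$. Edges internal to $C$ cancel in the Laplacian sum, leaving only the boundary edges $vu$ with $v\in C$ and $u\notin S$. On these $\delta_u=0$, so
\[
 \sum_{v\in C}b_v=\sum_{v\in C}m_v+\tfrac12\sum_{v\in C,\,u\sim v,\,u\notin S}\delta_v/\ell_{uv}.
\]
The boundary set is nonempty: $C$ contains no leaf, because $\delta$ vanishes at leaves, so $C\ne V$ and $T$ is connected. Every $\delta_v$ with $v\in C$ is strictly positive. Hence the flux term is strictly positive and $m\ge0$ contributes nonnegatively. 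Therefore $\sum_C b_v>0$, and some $v\in C$ has $b_v>0$.

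\textbf{Sign filters for $K\ge3$.} For $v\notin A$, every neighbour lies outside $S$ and $\delta_v=0$, so $b_v=m_v\ge0$. Thus negative baseline masses lie in $A$, and $|A|\le4K-8$ by \eqref{eq:local-size}. For $v\in N_T(S)\setminus S$ we have $\delta_v=0$, so the Laplacian term equals $-\frac12\sum_{u\in S,u\sim v}\delta_u/\ell_{uv}<0$. Hence $b_v<m_v$, and $b_v>0$ forces $m_v>0$. Consequently every vertex with $b_v>0$ lies in $\operatorname{supp}(m)\cup S$. This gives at most $2K+(K-2)=3K-2$ positive baseline masses, using $|S|\le K-2$ from \eqref{eq:local-size}.

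\textbf{Main obstacle.} The step needing the most care is the positive-count bound. The naive containment in $\operatorname{supp}(m)\cup A$ yields only about $6K$. The fix is the observation that the Laplacian correction is strictly negative on the outer boundary $N_T(S)\setminus S$, which removes those vertices from the count. A second check is that internal-vertex atoms and shared edge endpoints never enlarge $\operatorname{supp}(m)$ beyond $2K$. Each source contributes at most two endpoint masses, and merging coincident endpoints only reduces the support.
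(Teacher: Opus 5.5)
Your proposal is correct and is essentially the paper's argument: summing $b=m+\LP\delta/2$ over a component of $T[S]$ leaves a strictly positive boundary flux. The counts then follow from four facts: $b_v\le m_v$ off $S$, at most $2K$ positive endpoint masses, $|S|\le K-2$, and $|A|\le 4K-8$, with $S=\varnothing$ giving $b=m$ for $K\le2$.
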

Summing $b$ over a component of $T[S]$ leaves a positive boundary term, and a $K$-source mixture
carries at most $2K$ positive endpoint masses. Supplementary Appendix~E completes the counts.

\section{The exact algorithm and encoding bounds}\label{sec:algorithm}
We prove the upper bound first for complete observations. Compute $h$ by a postorder maximum pass and a preorder pass propagating the best value outside each child subtree. Maintaining the two best incoming values at each vertex gives $O(N)$ arithmetic and comparison operations. Compute $b,Y$, degrees, a parent map and baseline support lists in another tree pass. Reject inconsistent leaf envelopes or any count filter of Lemma~\ref{lem:anchor}. Reject $Y\le0$ when the pendant is free, and $Y<y_0$ when it is prescribed to be $y_0$: the deficit term subtracted in \eqref{eq:local-affine} is nonnegative, so $y\le Y$ in either regime. The positive and negative baseline lists now both have $O(K)$ entries.

\subsection{Complete single-exponential support enumeration}
For $K\le2$ inspect only $S=\varnothing$. Otherwise a possible deficit vertex is internal with degree between three and $K$. Call these vertices eligible and assign weight $w(v)=\deg(v)-2$. The eligible vertices with $b_v>0$ are anchors; there are $a\le3K-2$ of them. Enumerate every anchor subset $Q$ whose total weight is at most $K-2$. It is interpreted as exactly the set of anchors in $S$; all other anchors are permanently excluded.

Initialize the included vertices to $Q$ and the frontier to eligible unclassified neighbors of included vertices. In a fixed vertex order, either exclude the next frontier vertex permanently, or include it if its weight fits the remaining budget. Only inclusion scans its neighbors. When the frontier is empty, emit the included set. The empty anchor subset emits the empty set.

For any target $S$ satisfying the weighted budget and component-anchor condition, choose $Q=S\cap\{\text{anchors}\}$ and include precisely its vertices whenever encountered. Every component is reached from its seed, and every prefix stays within budget. Thus all admissible supports are emitted. The fixed order and permanent decisions give a unique branch for each target and seed mask.

Along a search branch, each queried vertex has been exposed by an included-vertex incidence. Inclusion is permanent, so every vertex queried anywhere along a branch is incident to that branch's own final included set $U$. The number of distinct queried vertices on the branch is therefore at most $\sum_{v\in U}\deg(v)\le3(K-2)$. The binary search tree therefore has at most $2^{3(K-2)+1}-1$ states for one seed mask. Multiplication by at most $2^a$ masks gives $2^{O(K)}$ states. Their bookkeeping is polynomial in $K$. No rejected vertex's adjacency is scanned.

This is not a new general enumeration technique: joining all anchors to an artificial root makes every admissible support connected with $O(K)$ vertices and external neighbors, so the rooted connected-set enumeration of Fomin and Villanger applies \cite[Lemmas 3.1--3.2]{fomin2012}. Either route needs Lemmas~\ref{lem:hull} and~\ref{lem:anchor}.

\subsection{Strict realization from positive masses}
Let $P=\{v:m_v>0\}$. A strict edge atom places positive mass at both endpoints, so a realization is an edge cover of the induced forest $T[P]$. Such a cover exists exactly when $T[P]$ has no isolated vertices; its minimum cardinality is $|P|-\nu(T[P])$, where $\nu$ is maximum matching size.

For later use suppose vertex atoms are allowed at a marked set $M$. Then a cover exists exactly when every isolated vertex of $T[P]$ belongs to $M$, and its minimum is still $|P|-\nu(T[P])$. A maximum matching, one incident edge for every unmatched nonisolated vertex, and a singleton for each isolate attain this count. For a lower bound, remove redundant members from any pair/singleton cover. Every selected edge has a private endpoint. Each nontrivial component of selected edges is therefore a star, and selected singletons are disjoint from selected edges. If there are $s$ stars the count is $|P|-s$, while one edge per star is a matching, giving $s\le\nu(T[P])$.

To realize a cover, give each isolated singleton its full mass, and split every nonisolated vertex mass equally among selected incident edges. On a selected edge with endpoint contributions $a,b>0$, use source weight $a+b$ and fraction $b/(a+b)$ from the endpoint receiving $a$. Positions are strictly interior and the weights sum to one. This is classical covering machinery; no new matching theorem is required.

\subsection{Exact zero faces and the common slack}
For each emitted $S$, only $A=S\cup N(S)$ changes. Use the stored baseline lists to reject negative masses outside $A$. Enumerate each $Z\subseteq A$ as the exact zero-mass coordinates inside $A$, and let
\begin{equation}\label{eq:face-support}
 P=(A\setminus Z)\cup\{v\notin A:b_v>0\}.
\end{equation}
Reject $|P|>2K$ or an illegal cover or a cover exceeding $K$. Build the induced forest using the global parent map and membership in $P$, without scanning a high-degree vertex's full adjacency.

For each remaining face use variables $\delta_v$ for $v\in S$ and a slack $t$:
\begin{align}\label{eq:slack}
 0\le t\le1,&\qquad \delta_v\ge t\quad(v\in S),\notag\\
 m_v(\delta)=0\quad(v\in Z),&\qquad m_v(\delta)\ge t\quad(v\in A\setminus Z),\\
 &y(\delta)\ge t.\notag
\end{align}
Fixed positive masses outside $A$ are already strictly positive and need not be bounded below by $t$. A strict solution on the specified supports exists if and only if this polytope contains $t>0$.

The polytope is bounded: $0\le t\le1$, $\delta\ge0$ and the pendant inequality implies $\sum_{v\in S}(\deg(v)-2)\delta_v\le2Y$. Every coefficient on $S$ is positive. Thus a positive maximum $t$ is attained at a vertex. Equalities may be replaced by two inequalities. Enumerating full-rank active bases finds all needed vertices, including vertices of lower-dimensional polytopes: at any extreme point the active normals span the ambient space. There are $O(K)$ inequalities and at most $K$ variables, including the slack when $K=1$, so at most $2^{O(K)}$ bases. Retain a feasible positive-slack vertex, realize its minimum cover, and keep the smallest cover over all supports and faces. Testing only the relative interior of the whole nonnegative polytope would miss solutions on mass-zero faces; their enumeration is essential.

\subsection{What the enumeration computes}
Call a pair $(S,Z)$ \emph{emitted} when the support search above emits $S$ and $Z\subseteq A$. Call it \emph{slack-positive} when no $b_v$ with $v\notin A$ is negative and \eqref{eq:slack} has a solution with $t>0$. Write $\operatorname{cov}(S,Z)=|P|-\nu(T[P])$ for $P$ as in \eqref{eq:face-support}, and $\operatorname{cov}(S,Z)=\infty$ when an isolated vertex of $T[P]$ carries no permitted atom. Let $k^\star$ be the least number of sources in a strict explanation of the data, and $\infty$ when there is none.

\begin{lemma}[Characterization]\label{lem:characterization}
Over emitted slack-positive pairs, the minimum of $\operatorname{cov}$ equals $k^\star$ when $k^\star\le K$, and exceeds $K$ otherwise. An empty minimum is $\infty$.
\end{lemma}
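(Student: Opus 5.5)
We prove two inequalities: every emitted slack-positive pair yields a strict explanation with $\operatorname{cov}(S,Z)$ sources, and, when $k^\star\le K$, an optimal explanation produces an emitted slack-positive pair with $\operatorname{cov}\le k^\star$. Together these give the claimed minimum. If $k^\star>K$, soundness alone shows no pair can have $\operatorname{cov}\le K$, since that would give an explanation with at most $K<k^\star$ sources.

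\emph{Soundness.} Take an emitted slack-positive pair $(S,Z)$ and a solution $(\delta,t)$ of \eqref{eq:slack} with $t>0$. Set $f=h+\delta$, extended by zero deficit off $S$. By \eqref{eq:local-affine}, the masses $m=b-\LP\delta/2$ and the pendant $y$ satisfy $Dm+y\one=f$, and the inverse identities \eqref{eq:inverse-identities} give $\one^\top m=1$. The constraints then fix the sign pattern of $m$:
\begin{itemize}
\item inside $A$, $m$ vanishes exactly on $Z$ and is at least $t>0$ on $A\setminus Z$;
\item outside $A$, $m=b$, which is nonnegative by slack-positivity;
\item $y\ge t>0$.
\end{itemize}
Hence $\{v:m_v>0\}=P$ as in \eqref{eq:face-support}. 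Since $\delta=0$ at the leaves, $f(r)=h(r)=d_r$ for every leaf $r$; the leaf-envelope check guarantees $h(r)=d_r$. Because $\operatorname{cov}(S,Z)<\infty$, the covering construction of the strict-realization subsection turns $m$ into $|P|-\nu(T[P])$ strict edge sources with positive weights. These sources reproduce the vertex endpoint masses, hence reproduce $f$ at the leaves, so they form a strict explanation.

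\emph{Completeness.} Fix a strict explanation with $k^\star\le K$ sources. Its potential $f$ is $1$-Lipschitz, so $f\ge h$. Let $\delta=f-h$, $S=\{\delta>0\}$, and let $m$ be the endpoint-mass vector, so $m=(\tau-\LP f)/2$ and $y>0$.
\begin{enumerate}
\item Lemma~\ref{lem:hull} applied with $k^\star$ sources gives the weighted budget for $S$. Since $k^\star\le K$, it holds for $K$, and $S=\varnothing$ when $K\le2$.
\item Lemma~\ref{lem:anchor} says every component of $T[S]$ contains a vertex with $b_v>0$. Every vertex of $S$ is internal with degree at most $K$, hence eligible; the nonleaf condition comes from $\delta=0$ at leaves.
\item By the completeness argument already given for the support search, choosing $Q=S\cap\{\text{anchors}\}$ makes the search emit $S$.
\item Let $Z=\{v\in A: m_v=0\}$. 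Outside $A$ we have $m=b\ge0$.
\item Put $t=\min\{1,\ \min_S\delta,\ \min_{A\setminus Z}m,\ y\}>0$. Then $(\delta,t)$ satisfies \eqref{eq:slack}, so the pair is slack-positive.
\end{enumerate}
The positive support of $m$ is exactly $P$. Each source sits strictly inside an edge with both endpoints in $P$, so the $k^\star$ source edges form an edge cover of $T[P]$. By the minimum-cover formula, $\operatorname{cov}(S,Z)\le k^\star$.

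\emph{Main obstacle.} The delicate point is matching emission to $Z$ and $P$. $Z$ must be taken relative to $A$, and vertices outside $A$ must carry exactly the baseline mass $b_v$. This needs $\LP\delta$ to be supported in $A=S\cup N(S)$, which holds because the Laplacian is local. We also need the vertex masses of an arbitrary strict explanation to coincide with $(\tau-\LP f)/2$. This follows from \eqref{eq:inverse-identities} together with the invertibility of $D$ on trees, since the inverse formula is a left inverse of $m\mapsto Dm+y\one$.

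The empty-minimum convention covers the remaining case. If there is no explanation and no pair qualifies, the minimum is $\infty$. If pairs qualify but all have $\operatorname{cov}>K$, the minimum still exceeds $K$.
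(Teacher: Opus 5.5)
Your proposal is correct and follows the paper's route, which is a construction and its inverse. Soundness builds $f=h+\delta$, reads $m$ and $y$ off the local affine formulas with support exactly $P$, and realizes a minimum edge cover. Completeness takes $S=\{\delta>0\}$ and $Z$ the zero masses inside $A$, uses the deficit and anchor lemmas to get emission, and notes that the $k^\star$ source edges cover $T[P]$. One small remark: you do not need the invertibility of $D$. Transposing $D\LP=\one\tau^\top-2I$ gives $\LP D=\tau\one^\top-2I$, hence $(\tau-\LP(Dm+y\one))/2=m$ whenever $\one^\top m=1$.
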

The proof is a construction and its inverse. A slack-positive pair yields a strict explanation with
$\operatorname{cov}(S,Z)$ sources; a strict explanation with $k\le K$ sources yields an emitted
slack-positive pair with $\operatorname{cov}\le k$. Supplementary Appendix~E gives both directions.

The set $P$ is a function of $(S,Z)$ and not of the point chosen in the face, so every positive-slack point of a face has the same cover number. Retaining one feasible vertex is therefore sound, although its deficit vector need not be the one an optimal explanation realizes. The remaining filters drop only pairs that cannot pay for themselves: $|P|>2K$ forces $|P|-\nu(T[P])\ge|P|/2>K$.

For a prescribed positive pendant $y_0$ and complete observations, replace $y(\delta)\ge t$ in \eqref{eq:slack} by $y(\delta)=y_0$. The equality pins $\sum_{v\in S}(\deg(v)-2)\delta_v=2(Y-y_0)$, so the polytope stays bounded and the active-basis enumeration still reaches every vertex. Both halves of the proof persist with pendant exactly $y_0$. With $S=\varnothing$ the equality is the entire test.

\subsection{Arithmetic and bit complexity}
After the $O(N)$ preprocessing, supports, zero faces and active bases each number $2^{O(K)}$. Their product remains single-exponential; local operations are polynomial in $K$. Source realization and induced-forest matching operate on $O(K)$ vertices. This proves the rational arithmetic bound.

For the bit bound, every $h(v)$ is an input observation minus a path-length sum. A common denominator whose bit length is bounded by the total input size bounds all such values polynomially in $\bits$. Each $b_v$ is a sum of neighboring potential differences divided by input edge lengths; $Y$ is another finite weighted sum. Their bit lengths are polynomial in $\bits$. Each local active system has $O(K)$ rows with these coefficients and reciprocals of lengths. Clearing row denominators and applying determinant bounds gives polynomial-bit solutions, because $K\le|E|\le\bits$. Exact elimination with reduced fractions and all sign checks require polynomial bit time per basis. Positive splitting and source reconstruction preserve polynomial encoding length. Hence the full bit complexity is $2^{O(K)}\operatorname{poly}(\bits)$ and a strict witness has polynomial size. This is not a bound of $O(\bits)+2^{O(K)}$ bit operations or a polynomial-kernel claim.

%% file: sections/partial_observations.tex
\section{Partial observations and strict lifting}\label{sec:partial}
With a free positive mean pendant and $R=\{r\}$, feasibility is exactly $d_r>0$: use one source at distance $a=\min\{\ell/2,d_r/2\}$ from $r$ on its incident edge and pendant $d_r-a$. Now assume $|R|\ge2$.

Let $H$ be the minimal subtree spanning $R$, retaining original vertices. Suppress its degree-two vertices to obtain a reduced backbone $B$. Its leaves are exactly $R$. Mark
\[
 M=\{g\in V(B):\deg_H(g)\ge3,\ \deg_T(g)>\deg_H(g)\}.
\]
The reduced problem permits strict edge atoms and atoms at these marked branch vertices, with a free positive pendant. Atoms at suppressed degree-two gates are already strict interior points of backbone edges.

\begin{lemma}[Budget-preserving projection]\label{lem:projection}
Minimum strict original-edge source count on $T$ equals minimum mixed edge/marked-vertex atom count on $(B,M)$, including infeasibility.
\end{lemma}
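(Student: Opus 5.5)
We prove the two inequalities separately, by a projection map in one direction and a lift in the other; the free pendant absorbs the change in distances in both. Throughout, for a point $q$ of $T$ write $\pi(q)$ for its nearest point on $H$. For every $r\in R$ we have $\dist_T(r,q)=\dist_T(r,\pi(q))+\dist_T(q,\pi(q))$, because the path from $q$ to any observed leaf enters $H$ at $\pi(q)$.

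\emph{Projection (reduced count $\le$ original count).} Take a strict explanation on $T$ with $k$ sources $q_i$, weights $p_i$ and pendant $y$. We set
\[
y'=y+\sum_i p_i\dist(q_i,\pi(q_i)),
\]
which is still positive. The projected points $\pi(q_i)$ with weights $p_i$ then reproduce every $d_r$, $r\in R$. Each projected point is one of three kinds:
\begin{itemize}
\item an interior point of an edge of $H$, hence strictly inside a backbone edge of $B$;
\item a suppressed degree-two vertex of $H$, which is also a strict interior point of a backbone edge;
\item a vertex $g$ of $H$ with $\deg_H(g)\ge3$.
\end{itemize}
In the third case $q_i\notin H$, since $q_i$ lies in the interior of an original edge. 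Hence $g$ has an incident edge leaving $H$, so $\deg_T(g)>\deg_H(g)$ and $g\in M$. Observed leaves are never projection targets, because a leaf of $T$ has no incident edges outside $H$. Several sources can land on the same backbone edge or marked vertex. We merge them at their weighted position, exactly as in Section~\ref{sec:model}; this is valid because distance along a backbone edge is affine, and it only lowers the count. So the reduced count is at most $k$.

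\emph{Lift (original count $\le$ reduced count).} Take a reduced explanation with atoms and pendant $y'>0$. First, an atom strictly inside a backbone edge lies on some original edge of the corresponding path of $H$. If it sits at a suppressed degree-two vertex $g$ of $H$, we need more care:
\begin{itemize}
\item If $\deg_T(g)=2$, this cannot happen, since $T$ is reduced and $g$ is not a leaf of $T$. So $\deg_T(g)\ge3$ and $g$ has an off-$H$ edge.
\item We could push the atom a tiny amount $\varepsilon$ along an $H$-edge. This changes leaf distances non-uniformly, so we avoid it.
\item Instead we push the atom a distance $\varepsilon$ into an off-$H$ original edge at $g$. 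This adds exactly $p\varepsilon$ to every observed distance, and we compensate by lowering the pendant by $p\varepsilon$.
\end{itemize}
Marked-vertex atoms are handled the same way, using an off-$H$ edge at $g$. We choose $\varepsilon$ smaller than the half-lengths of those edges and small enough that $y'-\sum p_i\varepsilon_i>0$. This keeps the pendant positive; it is the step where the free positive pendant is essential.

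\emph{Distinctness.} We must make sure that distinct reduced atoms land on distinct original edges without increasing the count. There are three cases:
\begin{itemize}
\item Two interior atoms on one backbone edge were already merged.
\item Interior atoms on different backbone edges lie on different original edges.
\item A marked atom at $g$ goes onto an off-$H$ edge, which is disjoint from all $H$-edges. The atoms at different vertices $g$ use edges at different vertices, and these coincide only if the edge joins two such vertices. That is impossible for an off-$H$ edge, since an edge with both endpoints in $H$ lies in $H$ by subtree convexity.
\end{itemize}
Infeasibility transfers because both maps exist whenever their source explanation exists.

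The main obstacle is the bookkeeping in the lift: choosing off-$H$ edges consistently and verifying strictness, distinct edges and positivity of the pendant simultaneously. The projection direction is routine once the nearest-point identity is in hand.
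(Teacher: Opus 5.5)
Your proof is correct and takes the paper's route. You project each source to its gate on $H$ and absorb the off-subtree excursion into the pendant, and you lift each atom at a suppressed or marked gate a small distance into a deleted incident edge, paying for it from the free positive pendant; this is the paper's argument, and your distinctness check (an off-$H$ edge has only one endpoint in $H$) supplies the detail the paper leaves implicit.
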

Every point $q$ has a unique gate $g$ in $H$ with $\dist_T(r,q)=\dist_H(r,g)+\dist_T(q,H)$ for
every $r\in R$, which projects an explanation onto $(B,M)$ without raising its count. A free
pendant pays to lift a backbone atom back into a deleted incident edge, and the lifted original
edges are distinct. Supplementary Appendix~E proves both directions.

The inverse formulas, weighted deficit bound, anchor identity and baseline filters apply to mixed atoms. A vertex atom at $v$ contributes zero to every initial-path overlap from $v$, so the proof of Lemma~\ref{lem:hull} is unchanged. Every atom contributes at most two positive vertex masses. The only change in the algorithm is the isolated-vertex marked-cover test already proved above, and Lemma~\ref{lem:characterization} then holds verbatim on $(B,M)$ with singleton atoms permitted at the marked gates. Mark flags are stored once and inspected only for the $O(K)$ current positive vertices.

Construct $H$ by pruning unobserved leaves; construct $B$ by scanning each retained chain once, storing its original edge list and one deleted incident edge at each gate. These operations are linear. For lifting, scan only the chains bearing the selected distinct edge atoms; their total length is at most $N$. All backbone lengths are sums of original rational lengths and have polynomial bit size. Taking minima with $Y/2$ and adjusting the pendant preserves the polynomial-bit witness bound. This proves the free-pendant part of Theorem~\ref{thm:upper}.

The marks are necessary. In a four-leaf unit star, a midpoint source on the omitted arm with pendant one gives distance $5/2$ to each of three observed leaves. Its projection is a marked center atom of count one. If that mark is discarded, a strict explanation on the three-arm star needs three sources: with at most two sources $S$ is empty, and the inverse of the center profile has its only positive mass at the isolated unmarked center. Three equal sources at arm midpoints and pendant $4/3$ realize the profile. With $d$ observed arms the gap runs from one to $d$: the reduced star's only internal vertex has degree $d$, so Lemma~\ref{lem:hull} forces $S=\varnothing$ for every $k<d$, the isolated unmarked center again blocks every cover, and $d$ midpoint sources with pendant $(d+1)/d$ realize the profile. This is a source-budget equivalence, not an original-support identification theorem. If the original pendant is fixed, projecting omitted distances changes it; the general fixed-pendant partial-observation upper bound does not follow.

For exactly one omitted leaf $z$, however, retain its neighbor $c$ rather
than suppressing it. A hidden source contributes a mandatory atom at $c$.
If the deleted edge has length $\ell$, projected mass $m_c$, and projected
pendant $Y_p$, its exact feasibility condition is
$0<Y_p-y_0<\ell m_c$. A mandatory-singleton edge-cover calculation and
common-slack constraints enforce this condition. There is at most one
degree-two vertex, so the deficit search still has $2^{O(K)}$ states.
Supplementary Appendix C gives the full algorithm, including bounds for that
zero-branching-weight coordinate and strict witness reconstruction.

%% file: sections/complexity_boundary.tex
\section{A parameterized boundary for missing references}
\label{sec:complexity-boundary}

The free-mean algorithm permits an arbitrary nonempty subset of observed
original leaves. Prescribing the mean changes this conclusion: an unobserved
excursion can no longer be absorbed into an unconstrained additive term.
The resulting obstruction is parameterized by the number of sources, rather
than by the number of candidate edges. The problem stays in $\mathsf{XP}$ by
enumerating supports, so what the hardness below rules out is detaching the
exponent from $N$, not solvability.

\begin{theorem}[Missing references with prescribed mean]
\label{thm:partial-hardness}
The strict source-budget problem is $\mathsf{W[1]}$-hard parameterized by $K$
on positive-rational unrooted strictly binary trees when an arbitrary subset
of original leaves is observed and the common mean is prescribed as $y=1$.
Hardness holds even when a polynomial-bit strict explanation using at most
$K+1$ sources is supplied and the minimum is promised to be either $K$ or
$K+1$. On the NO instances of the reduction, no normalized mixture of at most
$K$ arbitrary closed tree points explains the observations, even if vertex
locations and repeated edges are allowed.
\end{theorem}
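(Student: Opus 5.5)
I would reduce from \textsc{Multicolored Clique}, parameterized by the number $\kappa$ of colour classes, which is W[1]-hard. The instance is $G$ with vertex classes $V_1,\dots,V_\kappa$. The source budget will be $K=\kappa+\binom{\kappa}{2}+c$ for a small constant $c$ of mandatory calibration sources. The role of missing references is to hide long subtrees. With the mean prescribed as $y=1$, a source placed deep inside an omitted subtree cannot have its excursion absorbed into a free pendant. Instead it shifts the common additive term by a known amount, so each hidden excursion costs a precise quantity of pendant.

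\textbf{Gadgets.} For every class $i$ I would build a caterpillar gadget whose candidate source edges are indexed by $V_i$, and for every pair $\{i,j\}$ a gadget whose candidate edges are indexed by the edges of $G$ between $V_i$ and $V_j$. Each candidate edge hangs off the backbone through an unobserved pendant subtree. Its depth encodes the chosen element by a Sidon-type or large-base numbering, so that selecting vertex $u$ in class $i$ and edge $uv$ in pair $\{i,j\}$ produces matching contributions. Observed leaves sit at the two ends of every gadget. These observed distances enforce the following:
\begin{itemize}
\item the mass sent into each gadget is exactly $1/K$ (normalization);
\item the weighted position along the gadget equals a target value.
\end{itemize}
A consistency check then compares vertex-gadget and edge-gadget encodings. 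The prescribed $y=1$ fixes the total hidden excursion. Because the per-gadget masses are forced equal, this pins the sum of depth codes. With a Sidon-type encoding, that pinned sum is attainable only when every edge selection agrees with the vertex selections at its endpoints.

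\textbf{Both directions.}
\begin{itemize}
\item \emph{YES direction.} A clique gives one strict source per gadget plus the calibration sources, which is exactly $K$ sources.
\item \emph{Supplied explanation.} I would give an answer-independent strict explanation with $K+1$ sources. One extra source on a dedicated calibration edge corrects the mismatch of a fixed trivial selection. The same construction shows the minimum is always at most $K+1$.
\item \emph{Lower bound.} At least $K$ sources are always needed, since each gadget's observed leaves demand at least one source strictly inside it. A deficit-style argument like Lemma~\ref{lem:hull} forbids one source from serving two gadgets: a source outside a gadget leaves its envelope tight and contradicts its target profile.
\end{itemize}
This gives the promise that the minimum is either $K$ or $K+1$.

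\textbf{Main obstacle.} The hard step is the NO direction against arbitrary closed points. I must show that a normalized mixture of at most $K$ arbitrary tree points cannot fit when no clique exists. Such a mixture may include vertices, repeated edges, splits of mass within a gadget, or points inside omitted subtrees at non-coded depths. My plan has three steps:
\begin{enumerate}
\item Use the counting argument to show exactly one point per gadget and none elsewhere.
\item Use the observed end-leaf distances to force that point's mass and its projection onto the observed backbone. By the gate projection of Lemma~\ref{lem:projection}, the point's hidden depth then enters only through the pendant identity.
\item Make the per-gadget hidden depths take values in separated discrete sets, by giving omitted subtrees only the coded lengths and placing binary branchings so that intermediate points yield nonmatching observed distances. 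The fixed total then forces a consistent clique.
\end{enumerate}
Making this rigidity hold for continuous placements while keeping every tree strictly binary, with all lengths positive and polynomial-bit, is where I expect most of the effort to go.
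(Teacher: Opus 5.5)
Your combinatorial front end, Multicolored Clique to a one-integer-per-group sum with a carry-free or Sidon numbering and budget $g+O(1)$, matches the paper's. The geometric encoding you propose cannot carry it, because it stores the selected element in the \emph{depth} of a source inside an omitted subtree. By the gate projection you use in step~2, every point $q$ of an omitted subtree hanging at gate $g$ reports $\dist(r,g)+\dist(q,H)$ to every observed leaf $r$. Two points of that subtree therefore differ on the observations only by a common additive shift. Binary branchings inside it cannot make intermediate depths ``yield nonmatching observed distances'', so step~3 fails: a source's hidden depth is a continuous parameter on $[0,D)$, seen only through the pendant identity. Your gadgets also make the caterpillar useless. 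Once a gadget's mass is forced to $1/K$ and its weighted position to a target, a single source has its gate pinned, so all choice rides on depth. The prescribed mean then gives only $\frac1K\sum_i\dist(q_i,H)=Y_p-1$ with $0\le\dist(q_i,H)<D_i$, where $Y_p$ is the pendant forced by the observed distances. That is the one-sided condition $\sum_i D_i>K(Y_p-1)$. Taking the deepest branch in every gadget and placing sources at suitable depths satisfies it whether or not $G$ has a clique. No numbering can enforce an exact sum that sources can undershoot continuously.

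The missing idea is to let the choice change the \emph{mass}, and to squeeze the integer sum between two opposite inequalities. In the paper each integer $a$ of an active arm is an unobserved unit twig attached at arm coordinate $1/(1+a)$. An arm-tip contrast in effect fixes the first moment $p\,x$ of the arm's source rather than its mass, so the selected gate makes that mass grow with $1+a$. The contrasts force one source per active arm and one on a calibration arm, which leaves none elsewhere at $K=g+1$. The calibration contrast bounds the total active mass from above, which bounds $\sum a$ from above. The prescribed mean and a sentinel reference bound the hidden moment $\sum_j p_j\dist(q_j,H)$ from below. Because twigs have unit length, that moment is at most the active mass, which bounds $\sum a$ from below. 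The same squeeze excludes visible-arm sources, which spend mass without earning hidden moment, and an integer gap then forces the exact target. The continuous depth is harmless here because it enters only through $\dist(q_j,H)\le1$. This linear structure also yields the supplied $(K+1)$-source explanation: interpolate between group minima and maxima and split one group into two sources. Your ``one extra calibration source corrects a trivial selection'' has no such mechanism, since any extra source also disturbs the masses and positions you pinned. Finally, the explicit-length binary core, which must keep strict YES witnesses and keep the gap valid against arbitrary closed competitors, is a substantive part of the proof that your outline leaves open.
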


\begin{proof}[Proof idea]
A carry-free encoding reduces $t$-\textsc{Clique} to choosing one positive
integer from each of $g=t+\binom{t}{2}$ groups with a specified sum.
The tree has $g$ active arms, a calibration arm, and a sentinel reference.
Each integer $a$ on an active arm is represented by an unobserved unit twig
attached at coordinate $1/(1+a)$. Observed distance contrasts force one
source in every active arm and one on the calibration arm. Thus a budget
$K=g+1$ leaves no source elsewhere. The calibration contrast bounds the
total active mass from above; the prescribed mean and the sentinel distance
bound the hidden-distance moment from below. Together they exclude every
visible-arm source. All remaining sources select encoded integer gates,
and an integer-gap inequality forces their sum to equal the target.

A positive binary core replaces the high-degree junction. Its lengths are
chosen explicitly so that an exact rational linear system produces strict
YES witnesses and the integer-gap argument remains valid for every closed
competitor. Interpolating group minima and maxima requires splitting at most
one group and gives the supplied $K+1$-source explanation without solving the
selection problem. Supplementary Appendix~B proves the construction,
all strictness bounds, and polynomial encoding.
\end{proof}

\begin{corollary}[An upper bound on an unknown mean]
\label{cor:bounded-mean}
The same hardness and one-source promise gap hold when the mean is unknown
but constrained by $0<y\le1$. The supplied explanation has $y=1/2$; NO
instances exclude at most $K$ closed-point sources even for $0\le y\le1$.
\end{corollary}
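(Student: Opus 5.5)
We plan to derive Corollary~\ref{cor:bounded-mean} directly from the reduction behind Theorem~\ref{thm:partial-hardness}, changing only the pendant bookkeeping and not the tree or its integer encoding. The idea is that the bounded-mean problem sits between two regimes the theorem already handles. Its YES side is weaker than prescribing $y=1$, and its NO side is covered because the theorem's exclusion argument uses the mean only through a one-sided inequality.

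\emph{Correctness of the reduction.} On YES instances, the strict $K$-source explanation produced by Theorem~\ref{thm:partial-hardness} has $y=1$, which lies in $(0,1]$. It therefore witnesses YES for the bounded-mean problem with the same budget $K$.

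On NO instances, we reread the exclusion argument. There the prescribed mean enters only through the sentinel distance. This distance bounds the hidden-distance moment from below by $d_{\text{sentinel}}-y-(\text{visible part})$. Replacing $y=1$ by any $y\le1$ only makes this lower bound larger. So the moment inequality, and with it the exclusion of visible-arm sources, holds for every $0\le y\le1$. The remaining steps do not involve $y$:
\begin{itemize}
\item the contrast arguments forcing one source per active arm and one on the calibration arm;
\item the calibration upper bound on active mass;
\item the integer-gap step.
\end{itemize}
Hence no normalized mixture of at most $K$ closed points explains the data for any $y\in[0,1]$. The $y=0$ endpoint comes for free, since the argument never divides by $y$ or uses its positivity. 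This gives hardness for the constrained-unknown-mean problem, with the same parameter $K=g+1$.

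\emph{Supplied explanation with $y=1/2$.} We need a polynomial-bit strict $(K+1)$-source explanation with $y=1/2$. Take the supplied $(K+1)$-source explanation with $y=1$ from Theorem~\ref{thm:partial-hardness}. To lower the pendant by $1/2$, we must raise every observed distance contribution by exactly $1/2$. The cleanest device is a source placed on an unobserved twig, which adds a common distance to all observed leaves beyond its gate. We will:
\begin{enumerate}
\item move part of the mass of the split group's interpolating sources further out along their hidden unit twigs;
\item alternatively, if the twigs are too short, transfer the extra $1/2$ through the calibration arm's free parameters.
\end{enumerate}
In either case we solve the same exact rational linear system used in Appendix~B for the generator, with the right-hand side shifted by $1/2$. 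We then verify that all positions stay strictly interior and all weights positive, and that the source count stays at $K+1$ by reusing occupied edges.

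\emph{Expected main obstacle.} The hardest step is this shifted generator. Absorbing $1/2$ into hidden excursions requires total weighted hidden depth of at least $1/2$, and the unit twig lengths are bounded. If the existing twigs cannot carry it, the fallback is to scale the instance. We multiply all edge lengths and distances by $2$, so that a $y=1$ witness of the scaled instance corresponds to $y=1/2$ after renormalizing the pendant constraint. We then recheck that scaling preserves the NO bound on $[0,1]$; for this, the lower-bound slack in the sentinel inequality must be at least linear in the scale.
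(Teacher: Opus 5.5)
Your YES and NO directions are sound and are the natural argument. A $K$-source witness with $y=1$ is admissible under $0<y\le1$. The mean enters the exclusion only through the sentinel equation, where it is subtracted, so any $y\in[0,1]$ only raises the lower bound on the hidden moment. The contrast, calibration and integer-gap steps do not involve $y$. Note also that Theorem~\ref{thm:partial-hardness}'s own supplied $(K+1)$-source explanation has $y=1$, which is admissible here too. That already gives the hardness and the one-source promise gap.

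\textbf{The gap.} The remaining claim is that the supplied explanation has $y=1/2$. Your proposal does not construct such an explanation, and none of your three devices works as stated.

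\emph{Deepening hidden sources.} Only excursions into unobserved subtrees shift every observed distance equally. On unit twigs, a shift of $1/2$ needs $\sum_{\text{hidden}}p_i(1-\eta_i)>1/2$. But the theorem excludes visible-arm sources by combining two bounds: the calibration cap on active mass, and the moment lower bound at $y=1$. For that exclusion to work, the $y=1$ moment bound must come within less than one visible-arm source's mass of the hidden capacity the cap allows. You do not show that $1/2$ of room remains, and this mechanism indicates it does not.

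\emph{Adjusting the calibration arm.} Moving the calibration source is not a uniform shift. It changes the calibration contrast, which the data pin.

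\emph{Rescaling.} This cannot help, because $\lambda I$ has exactly the explanations $(\lambda q,p,\lambda y)$ of $I$. Doubling sends your $y=1$ YES witnesses to $y=2$, outside $(0,1]$. Halving puts the explanations at $y=1/2$, but then the NO side must exclude every $y\le2$ in the original units. Your one-sided monotonicity does not give this, since a larger $y$ weakens the moment bound.

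What is missing is a margin built into the instance itself. The reduction's constants must be chosen so that the NO exclusion survives every $y\le1$ while a strict $(K+1)$-source explanation still fits at $y=1/2$. The relevant constants are the twig length relative to the pendant, the sentinel distance, and the calibration. The strictness and polynomial-bit bounds of Appendix~B must then be rechecked for that choice. You flag this step as the main obstacle but leave it open.
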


Consequently, an algorithm with running time $f(K)\operatorname{poly}(\mathcal B)$
for either restricted model would imply $\mathsf{FPT}=\mathsf{W[1]}$.
This does not contradict the free-positive-mean upper bound: imposing an
external upper bound is a different feasibility constraint. The reductions
use potentially short edges and small source margins; they establish neither
bounded-aspect-ratio hardness nor robustness to observational error.

%% file: sections/experiments.tex
\section{What incomplete references cost}
\label{sec:experiments}
The algorithm is also an instrument. We ran it on a frozen bank of exactly explainable profiles
built from published reference phylogenies, 20 to 361 observed leaves, $K=2$--12, and on 385 real
distance profiles: 296 ANI and 60 placement profiles from the parent
benchmark~\cite{pdd2026,krepp2026} and 29 stool samples on a 10,575-leaf 16S backbone. For a support $S$ of $k$ distinct edges let $\theta(S)$ be the least worst-leaf deviation of
\eqref{eq:model} over its closed version, one exact rational linear program, and let
$\Theta_k=\min_{\lvert S\rvert=k}\theta(S)$, non-increasing in $k$. For $k=1$, two tree traversals and a
constant-size minimization per edge give $\Theta_1$ and a closed attaining witness in
$O(\lvert V\rvert)$ exact rational operations, which is what makes the diagnostic affordable at
10,575 references. The minimum
count within a tolerance $\varepsilon>0$ is $m(\varepsilon)=\min\{k:\Theta_k<\varepsilon\}$; the strict
inequality is a convention that matters only at $\varepsilon=\Theta_k>0$, where an explanation
attaining the bound exactly need not satisfy the strict positivity constraints. At
$\varepsilon=0$ the tolerant program returns the frozen exact solver's minimum on all 1,118 leaf sets
checked against it. Protocols, tables and further figures are in the supplementary experimental
details.

\paragraph{One-sidedness.}
The $2^{O(K)}\operatorname{poly}(\mathcal B)$ bound of Section~\ref{sec:algorithm} is for
$\varepsilon=0$: at positive tolerances we enumerate supports, and large-panel one-source analyses
use a separate closed form. Deleting an observed leaf deletes two rows from every one of these programs, so each $\Theta_k$ can
only fall, at any $\varepsilon\ge0$, and nothing rose in any verification draw (Supplementary
Appendix~F).
The exact minimum fell in 450 of 1,968 completed paired draws, by up to six sources; those draws
come from the 83 of 101 registered cases whose full panel completed. The fall
rate rises with the level, from 0.2\% for a single reference to 59.1\% at 75\% loss. Which references go matters more than how many: whole-clade removal
lowered the count in 32.5\% of draws and nearest-to-source removal in 29.7\%, against 10.3\% at
random, and those ratios hold between 2.0 and 4.0 at every non-vacuous tolerance
(Figure~\ref{fig:tolerance}b). The whole-taxon regime
tested here behaves like random removal on the ANI panels, a ratio of 1.1 at 75\% loss. Its granularity undershoots the nominal
level, a median 0.733 where 0.75 was asked on the ANI trees but only 0.386 on the placement trees,
so the two arms are not comparable at a fixed nominal level and we draw no placement conclusion from
it; panel breadth confines the ranks to genus and species.

\begin{figure}[t]
  \centering
  \includegraphics[width=\linewidth]{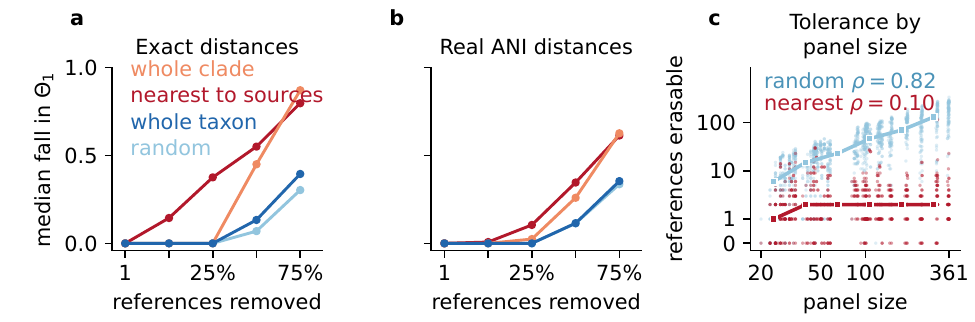}
  \caption{What incomplete references cost; in (a,b) higher is more damage. (a) Median fall in $\Theta_1$ against the share of references removed, over 296 exactly explainable profiles: 0 leaves the one-source tolerance unchanged, 1 collapses it. (b) The same draws on real ANI estimates of the same trees and the identical leaf sets; the ordering attenuates but does not reverse. (c) References erasable before the generating support stops being the unique smallest explanation, against panel size, pooled over $K=2$--10; dots are measurements, squares medians in logarithmic bins. Draws per regime and level: 592, and 296 for nearest-to-sources; (c) pools 713 adversarial and 1{,}330 random measurements.}
  \label{fig:incompleteness}
\end{figure}

\paragraph{Uniqueness.}
A median of two references nearest the sources can be erased before the true support stops being
the unique minimum. The erasure that breaks it leaves a same-size
competing support in 671 of 713 adversarial measurements, so identifiability of the support is usually
lost before the count moves, and lowers the minimum below $K$ in the other
42. Pooled over $K$, its rank correlation with panel size is
$0.10$ against $0.82$ for random loss: an order of magnitude weaker, not absent. Random loss tolerates a median of 32.5 references and whole clades 18.

\begin{figure}[t]
  \centering
  \includegraphics[width=\linewidth]{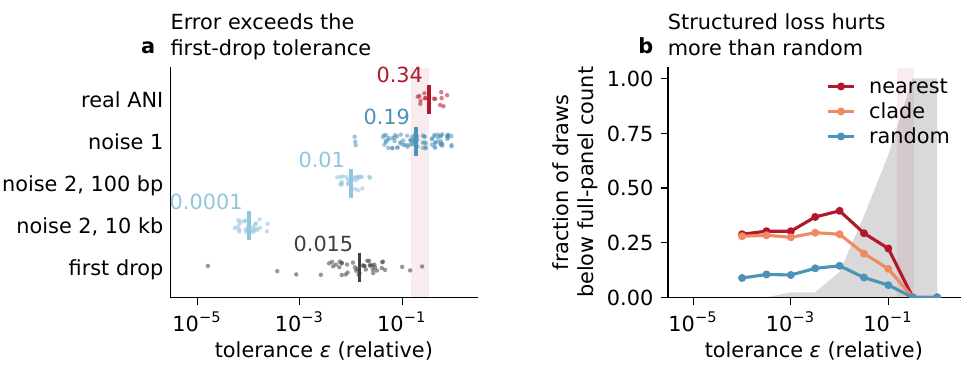}
  \caption{The tolerance axis, on exactly explainable instances; both panels scale by the case's own
  median observed distance over five decades. (a) The bottom row is the tolerance at which
  $m(\varepsilon)$ first drops below the exact source count, one point per case with a bar at the
  median; the rows above are measured distance errors on the same geometries, all at the worst leaf: two
  simulated noise models and real ANI. Real ANI error sits about an order
  of magnitude above the first drop; of the simulated settings only noise model 2 at 10\,kbp stays
  below the paired threshold at every worst leaf in all 28 cases; the band
  marks the real ANI median-to-worst-leaf range. (b) Share of removal draws whose
  $\varepsilon$-tolerant count is below their own full-panel count, by regime; the wedge is the
  share whose full-panel count has already collapsed to one source.}
  \label{fig:tolerance}
\end{figure}

\paragraph{The tolerance.}
With no reference removed, the minimum first drops below the exact count at a median tolerance of
$0.0147$ of the case's median observed distance, over 43 cases (Figure~\ref{fig:tolerance}a). It tracks the smallest constituent
of the mixture far more closely than the panel size ($\rho=0.665$ against $0.187$).
Real ANI distances deviate from their exact partners by
$0.34$ of a median observed distance at the worst leaf, exceeding the paired exact profile's first-drop
threshold in 15 of 15 cases at a median ratio of $14.5$. On that exact
profile's ladder, a tolerance holding the error leaves a minimum of one source in all 15 against three
at tolerance zero in 14. Whether the estimates themselves admit one source is a separate
computation, answered below.

\paragraph{Real distances.}
$\Theta_1$ needs no error model, no ground truth and no admissible floor, so it is measurable on
samples that have none. Computed on the estimates themselves, the worst-leaf residual of the true
configuration exceeds their own $\Theta_1$ in 296
of 296 ANI cases, median ratio 2.9, and in 50 of 60 placement cases. The sum-of-squares analogue $\Lambda_1$ agrees, 294 of 296. On the 10,575-reference backbone the contrast is wider:
discarding 7,931 references at random costs 1.8\% of $\Theta_1$, two predetermined clade deletions
of comparable size 90\%. No biological source count is inferred for any real sample.

\paragraph{One frozen heuristic release.}
Against DecoDiPhy 1.5.0 the exact method is a median 36 times cheaper end to end on the 675
exactly generated profiles it explains, while the heuristic's mixture fits real observations better
than the configuration that generated them in 282 of 296 cases, which does not establish recovery;
Supplementary Appendix~F has the timings, the budget-dependent recovery table and the norm
dependence.

%% file: sections/discussion.tex
\section{Discussion}\label{sec:discussion}
The classification separates three effects: a small source count localizes the unknown potential,
omitted subtrees require strict original-edge lifting, and fixing the mean pendant stops off-subtree
distances from being freely absorbed.

These conclusions concern exact rational data and a known metric tree.
The reductions use short edges and small margins; they do not establish
hardness at bounded aspect ratio or under fixed abundance and interior-position
floors. The prototypes are exact reference implementations, not demonstrated
replacements for large-scale heuristic inference. Arbitrarily small changes in an exact profile can raise its
minimum source count from one to any prescribed $M$, even on unit binary trees with fixed pendant
and midpoint sources (Supplementary Appendix~D), so a minimum compatible explanation need not
recover the generating composition. The tolerance is the stable quantity, $1$-Lipschitz in the
observed vector, so that perturbation moves it by at most its own size. What a profile does
support is the tolerance at which an explanation of a given size exists, and that is what methods
should report.

\paragraph{What this means for a practitioner.}
$\Theta_1$ needs no error model and no ground truth, and a fitted mixture's residual against it has
a definite reading. For a verified strict $K$-source fit with worst-leaf residual
$r_{\mathrm{fit}}$ on the same tree and panel, $r_{\mathrm{fit}}<\Theta_1$ gives
$2\le m(\varepsilon)\le K$ at every budget $r_{\mathrm{fit}}<\varepsilon<\Theta_1$: the fit is the
upper witness and no single source is admissible. On one ANI case a strict three-source witness with
$r_{\mathrm{fit}}=0.0280$ against $\Theta_1=0.0357$ gives $2\le m(0.03)\le3$, the one-source
exclusion checkable by hand from three references (Supplementary Appendix~F). These conditional model counts do not identify biological composition,
and the budget is not an uncertainty estimate. The second thing worth checking is which
references are absent rather than how many, worth a factor of fifty on the 10,575-leaf stool backbone. Code and data:
\url{https://github.com/Haotian-Tang/recomb2027-pdd}.